\newtheorem{fact}{Fact}
\newcommand{\imply}{\ensuremath{\rightarrow}}
\newcommand{\mil}{\ensuremath{\mathbf{M}_{\rightarrow}}}
\newcommand{\SEQ}{\ensuremath{\Rightarrow}}
\newcommand{\Seq}[2]{\ensuremath{#1 \SEQ #2}}
\title{How many times do we need an assumption ?}
\author{Edward Hermann Haeusler}
\institute{Dep. Inform\'atica \\ PUC-Rio \\ \texttt{hermann@inf.puc-rio.br}}
\date{}
\begin{document}

\maketitle

\begin{abstract}
In this article we present a class of formulas $\varphi_n$, $n\in Nat$, that need at least $2^{n}$ assumptions to be proved in a normal proof in Natural Deduction for purely implicational minimal propositional logic. In purely implicational classical propositional logic, with Peirce's rule, each $\varphi_n$ is proved with only one assumption in Natural Deduction in a normal proof. Hence, the formulas $\varphi_n$ have exponentially sized proofs in cut-free Sequent Calculus and  Tableaux. In fact $2^n$ is the lower-bound for normal proofs in ND, cut-free Sequent proofs and Tableaux. We discuss the consequences of the existence of this class of formulas for designing automatic proof-procedures based on these deductive systems.  
\end{abstract} 


\section{Introduction}\label{intro}

Providing proofs for propositional tautologies seems to be a hard task.  Huge proofs are such that their size is  super-polynomial with regard to  the size of their conclusions.  Knowing that there is a classical propositional logic tautology having only huge proofs is related to know whether $NP=CoNP$ or not (see \cite{Cook}). Intuitionistic logic is PSPACE-complete (\cite{Ladner}) and Richard Statman (see \cite{Statman}) showed that purely implicational minimal logic (\mil) is PSPACE-complete too. 
We showed in \cite{Haeusler} that, if a propositional logic has a Natural Deduction (ND) with the sub-formula property then it is PSPACE-Hard. This follows from the fact that \mil polynomially encodes any propositional logic that has such ND system.  Thus, 
the existence of huge proofs for a more general class of propositional logics is related to the existence of huge proofs in \mil that amounts to know whether $PSPACE=NP$ or not. The relations between these computational complexity classes and the existence of huge proofs involve arbitrary proof systems, indeed. For example, $NP=PSPACE$ is the case, if and only if, for any \mil tautology there is a proof system that produces a polynomially sized proof of this tautology. 

Dealing with arbitrary and general proof systems is quite hard, and is out of scope of this article. However, studying particular proof systems for key logics, like \mil or classical logic, can shed some light on practical aspects of implementing propositional theorem provers from the efficiency and economy of storage point of view.    

\mil carries almost all the proof-theoretical and logical information to produce polynomially bounded proofs in well-behaved\footnote{With sub-formula property} propositional logics. Thus we can conclude that focusing investigations on \mil is worth of noticing. 

There are many proof systems for \mil. The most well-known are structural/analytic  proof systems. Well-known systems are the Sequent Calculus (\cite{Gentzen}, Natural Deduction (\cite{Gentzen} and \cite{Prawitz}) and Tableaux (\cite{Beth,Smullyan}) based. These systems, mainly the first and the third kind, are quite good in providing means to produce proofs automatically. The backward chaining procedure, for example, if applied to a Sequent Calculus based proof system provides an automatic way to produce proofs. The problem with these proof procedures is when a decision on which rule to apply has to be made and how to deal with non-provable formulas when it is the case. With respect to this feature of dealing with invalid formulas, the  literature on both systems, Sequent Calculus and Tableaux, provides methods that either produce a proof or a counter-model. 

In \mil,  to provide a counter-model is so hard as to provide a proof, since it is a PSPACE-complete problem and the complexity class is deterministic. We know that \mil has finite model property and that the size of the counter-model is super-polynomially  upper bounded with respect to  the formula. It is interesting to investigate how this is related to the size of proofs in \mil, or at least to have a concrete evidence that huge proofs may be the case. Our intention is not only show huge proofs in \mil. To do that, we can use the polynomial translations reported in \cite{Statman} or \cite{Haeusler} to generate a formula from the Pigeon-Hole principle into \mil. We know, from \cite{Haken}, that this formula has only super-polynomially sized proofs in Resolution, and hence in cut-free Sequent Calculus and the same happens to the translation to \mil in cut-free Sequent Calculus and Natural Deduction. It is quite hard to detect from this approach to obtain huge proofs in \mil any precise reason for the super-size of the resulting proof. We believe that directly focusing on \mil is a more promising path, since \mil has less combinatorial alternatives, less logical constants, less alternative deductive system. The genesis of huge proofs in \mil is interesting and may shed some new light in propositional logic complexity. 

This article, motivated by the relationship between counter-model and proof construction in an integrated process,  improves a bit the knowledge on this subject by showing  a class of formulas in \mil that have proofs with at least exponentially-many assumptions, and hence are super-polynomially sized. The relation of these formulas with counter-model generation is discussed in section~\ref{counter-model}. In fact what is reported here was generated by the need of providing counter-models in a naive proof-procedure  for \mil based in Sequent Calculus as it is briefly discussed in section~\ref{counter-model}. In section~\ref{sec:formula} we introduce the class of formulas and in section~\ref{limits} we show  that  
they have exponentially sized normal proofs in the usual Natural Deduction for \mil. In the same section we also show that this is a lower bound in \mil. In classical propositional logic, these formulas have linear-sized proofs as it is shown in section~\ref{sec:formula}.

 All the formal propositional proofs/derivations in this article are presented in Prawitz-style Natural Deduction. The size of these normal proofs/derivations is polynomially simulated by cut-free Sequent Calculus and/or Tableaux. Thus, the lower bound shown here also applies to them.

\section{The purely implicational minimal logic}

The (purely) implicational minimal logic \mil is the fragment of minimal logic containing only the logical constant $\imply$. Its semantics is the intuitionistic Kripke semantics restricted to $\imply$ only. Given propositional language $\mathcal{L}$, a \mil model is a structure $\left< U,\preceq, \mathcal{V} \right>$, where $U$ is a non-empty set (worlds), $\preceq$ is a partial order relation on $U$ and $\mathcal{V}$ is a function from $U$ into the power set of $\mathcal{L}$, such that if $i,j\in U$ and $i\preceq j$ then $\mathcal{V}(i)\subseteq\mathcal{V}(j)$. Given a model, the satisfaction relationship $\models$ between worlds, in the model,  and formulas is defined as:
\begin{itemize}
\item $\left< U,\preceq, \mathcal{V} \right>\models_{i}p$, $p\in\mathcal{L}$, iff, $p\in\mathcal{V}(i)$
\item $\left< U,\preceq, \mathcal{V} \right>\models_{i}\alpha_1\imply\alpha_2$, iff, for every $j\in U$, such that $i\preceq j$, if $\left< U,\preceq, \mathcal{V} \right>\models_{j}\alpha_1$ then $\left< U,\preceq, \mathcal{V} \right>\models_{j}\alpha_2$.
\end{itemize}

Obs: In (full) minimal logic, $\bot$ has no special meaning, so there is no item declaring that $\left< U,\preceq, \mathcal{V} \right>\not\models_{i}\bot$. We remind that \mil does not have the $\bot$ in its language.  

As usual a formula $\alpha$ is valid in a model $\mathcal{M}$, namely $\mathcal{M}\models\alpha$, if and only if, it is satisfiable in every world $i$ of the model, namely $\forall i\in U \mathcal{M}\models_{i}\alpha$. A formula is a \mil tautology, if and only if, it is valid in every model. A formula is satisfiable in \mil if it is valid in a model $\mathcal{M}$ of \mil.   
The problem of knowing whether a formula is satisfiable or not is trivial in \mil. Every formula is satisfiable in the model 
$\left< \{\star\},\preceq, \mathcal{V} \right>$, where $\star$ is the only world, and $p\in\mathcal{V}(\star)$, for every $p$. Thus, $SAT$ is not an interesting problem in \mil. The same cannot be told about knowing whether a formula is a \mil tautology or not. 

It is known that Prawitz Natural Deduction system for minimal logic with only the $\imply$-rules ($\imply$-Elim and $\imply$-Intro below)  is sound and complete for the \mil Krikpe semantics. As a consequence of this, Gentzen's $LJ$ system (see \cite{Takeuti}) containing only right and left $\imply$-rules is also sound and complete. As it is well-known one of these rules is not double-sounded  and not invertible. A naive proof-procedure for \mil based only on this usual Gentzen sequent calculus is not possible. 
\begin{prooftree} 
\AxiomC{$[\alpha]$}
\noLine
\UnaryInfC{$\mid$}
\noLine
\UnaryInfC{$\beta$}
\RightLabel{$\imply$-Intro}
\UnaryInfC{$\alpha\imply\beta$}
\AxiomC{$\alpha$}
\AxiomC{$\alpha\imply\beta$}
\RightLabel{$\imply$-Elim}
\BinaryInfC{$\beta$}
\noLine
\BinaryInfC{}
\end{prooftree}

In section~\ref{counter-model} we discuss the consequences of the existence of the class of formulas presented in this article to obtain a naive, complete and sound proof-procedure for \mil.  

\section{Needing exponentially many assumptions}\label{sec:formula}

In \cite{GillesJiang} we can find a  discussion on the fact that when proving theorems in a logic weaker than classical logic, the need of using an assumption more than once has a strong influence on how complex is the proof procedure and consequently the decision procedure for this logic. There, we can find the formula $((((A\imply B)\imply A)\imply A)\imply B)\imply B$. Considering the proof systems of ND and CS  mentioned in the previous section, this formula needs to use the assumption $((A\imply B)\imply A)\imply A)\imply B$ at least twice in order to be proved in \mil. Inspired by this example, we can define a class of formulas with no bounds on the use of assumptions. This shows that limiting the use of assumptions in an automatic proof-procedure for \mil is not an alternative that ensures completeness. In the sequel we define the class of formulas. Below you find a normal proof of 
$((((A\imply B)\imply A)\imply A)\imply B)\imply B$. Note that it cannot be proved with less than 2 assumptions of $(((A\imply B)\imply A)\imply A)\imply B$.

The following formula combines two instances of the formula mentioned above in order to have a formula that needs 4 times an assumption. 
\begin{eqnarray}
((((A\imply \xi)\imply A)\imply A)\imply \xi)\imply C\label{formula}
\end{eqnarray}
, where 
$\xi=(((D\imply C)\imply D)\imply D)\imply C$.

In figure~\ref{prova} we show a normal derivation of this formula~\ref{formula} above. We can see that it has 4 assumptions of 
$((A\imply \xi)\imply A)\imply A)\imply \xi)$. We can see how to use this pattern such that if it is repeated n-times we define a formula $\varphi_n$, such that, any normal proof of $\varphi_n$ has to use an assumption at least $2^n$ times, see section ~\ref{limits}. Before we proceed with $\varphi_n$ definition, we have to show that the need for repeating assumptions is not the case for classical propositional logic. 

Consider now that the logic is the purely implicational {\em classical} logic instead of the purely implicational minimal logic. That is, we consider the $\imply$ introduction and elimination rules, plus the classical absurdity rule, or the Peirce's rule: from $C\imply D\vdash C$ then infer $C$. Taking into account the version with Peirce's rule, we provide the proof of the formula~\ref{formula} with only use of assumption, as shown in figure~\ref{outraprova}. This comes from the fact that $(((D\imply C)\imply D)\imply D)$ in an instance of the implicational form of Peirce's rule, so it is provable. From this proof and $\xi=(((D\imply C)\imply D)\imply D)\imply C$ we prove $C$. $\xi$ itself is provable by means of a proof of the Peirce's formula $((A\imply \xi)\imply A)\imply A)$ and the $(((A\imply \xi)\imply A)\imply A)\imply \xi$ discharged to proof the desired formula. The purely implicational classical logic is not the focus of this article, in \cite{Studia} and \cite{LeoGordeev} we can find a detailed presentation of the purely implicational classical logic with some  proof-theoretic results. Our discussion on the classical setting has the purpose of showing how the use of classical logic can, in some cases, turns proofs smaller.

\begin{landscape}
\begin{figure}[h]
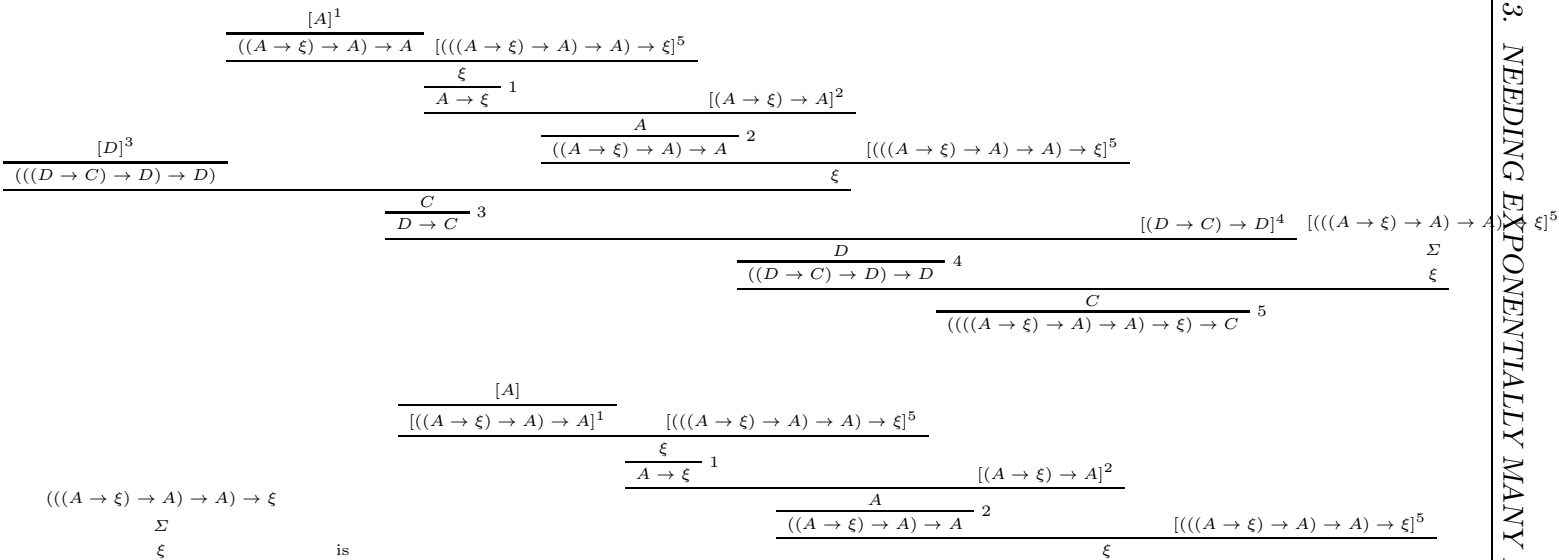

{\tiny
\begin{prooftree}
\def\defaultHypSeparation{\hskip 0mm}
\insertBetweenHyps{\hskip -3cm}
\AxiomC{$[D]^{3}$}
\UnaryInfC{$(((D\imply C)\imply D)\imply D)$}
\AxiomC{$[A]^{1}$}
\UnaryInfC{$((A\imply \xi)\imply A)\imply A$}
\AxiomC{$[(((A\imply \xi)\imply A)\imply A)\imply \xi]^{5}$}
\BinaryInfC{$\xi$}
\RightLabel{1}
\UnaryInfC{$A\imply\xi$}
\AxiomC{$[(A\imply\xi)\imply A]^{2}$}
\BinaryInfC{$A$}
\RightLabel{2}
\UnaryInfC{$((A\imply\xi)\imply A)\imply A$}
\AxiomC{$[(((A\imply \xi)\imply A)\imply A)\imply \xi]^{5}$}
\BinaryInfC{$\xi$}
\BinaryInfC{$C$}
\RightLabel{$3$}
\UnaryInfC{$D\imply C$}
\AxiomC{$[(D\imply C)\imply D]^{4}$}
\BinaryInfC{$D$}
\RightLabel{4}
\UnaryInfC{$((D\imply C)\imply D)\imply D$}
\AxiomC{$[(((A\imply \xi)\imply A)\imply A)\imply \xi]^{5}$}
\noLine
\UnaryInfC{$\Sigma$}
\noLine
\UnaryInfC{$\xi$}
\BinaryInfC{$C$}
\RightLabel{5}
\UnaryInfC{$((((A\imply \xi)\imply A)\imply A)\imply \xi)\imply C$}
\noLine
\UnaryInfC{}
\end{prooftree}
\begin{prooftree}
\AxiomC{$(((A\imply \xi)\imply A)\imply A)\imply \xi$}
\noLine
\UnaryInfC{$\Sigma$}
\noLine
\UnaryInfC{$\xi$}
\AxiomC{is}
\AxiomC{$[A]$}
\UnaryInfC{$[((A\imply \xi)\imply A)\imply A]^{1}$}
\AxiomC{$[(((A\imply \xi)\imply A)\imply A)\imply \xi]^{5}$}
\BinaryInfC{$\xi$}
\RightLabel{1}
\UnaryInfC{$A\imply\xi$}
\AxiomC{$[(A\imply\xi)\imply A]^{2}$}
\BinaryInfC{$A$}
\RightLabel{2}
\UnaryInfC{$((A\imply\xi)\imply A)\imply A$}
\AxiomC{$[(((A\imply \xi)\imply A)\imply A)\imply \xi]^{5}$}
\BinaryInfC{$\xi$}
\noLine
\TrinaryInfC{}
\end{prooftree}
}
\caption{Proof of the formula in purely implicational minimal logic}\label{prova}
\end{figure}
\end{landscape}

\begin{figure}[h]
{\tiny
\begin{prooftree}
\AxiomC{$\Pi_{Peirce2}$}
\noLine
\UnaryInfC{$((D\imply C)\imply D)\imply D$}
\AxiomC{$\Pi_{Peirce1}$}
\noLine
\UnaryInfC{$((A\imply\xi)\imply A)\imply A$}
\AxiomC{$[(((A\imply \xi)\imply A)\imply A)\imply \xi]^{1}$}
\BinaryInfC{$\xi$}
\BinaryInfC{$C$}
\RightLabel{1}
\UnaryInfC{$((((A\imply \xi)\imply A)\imply A)\imply \xi)\imply C$}
\end{prooftree}
}
\caption{Proof of the formula in purely implicational {\em classical} logic}\label{outraprova}
\end{figure}

\section{No bounds for occurrence  assumptions in \mil}
\label{limits}

In this section we prove that for each $n$ there is a formula $\varphi_{n}$, such that, any normal proof of $\varphi_{n}$ has at least $2^{n}$ occurrence assumptions of the same formula, that are all of them discharged in only one introduction rule. 
The following proposition~\ref{upper} shows that $2^n$ is an upper bound by showing the normal proof that uses $2^{n}$ assumptions for proving $\varphi_{n}$. Theorem~\ref{lower} shows that there is no normal proof for any of the $\varphi_{n}$, in \mil, with less than $2^{n}$ assumptions discharged. In the sequel we define $\varphi_{n}$. As it was already said in section~\ref{sec:formula}, $\varphi_{n}$ raises from an iteration process derived from the previous examples. 

\begin{definition}
Let $\chi[X,Y]=(((X\imply Y)\imply X)\imply X)\imply Y$. Using $\chi[X,Y]$ we define recursively a family of formulas. Consider the propositional letters $D_i$, $C_i$, $i>0$. Let $\xi_i$, $i>0$, be the formula recursively defined as: 
\begin{eqnarray}
\xi_{1} & = & \chi[D_1,C] \\
\xi_{i+1} & = & \chi[D_{i+1},\xi_{i}] 
\end{eqnarray}
Using this family of formulas we define the formula $\varphi_{n}$, $n>0$, such that, for any $i\ge 0$:
\[
\varphi_{i+1} = \xi_{i+1}\imply C  
\]
\end{definition}

We can observe that $\varphi_1=\xi_1\imply C$ can be proved by using proof $\Sigma$, replacing $\xi$ for $C$ and $A$ for $D_1$, and applying an $\imply$-introduction as the last rule. The obtained proof has 2 occurrence assumptions of the formula $\xi_1$. The proof of $\varphi_2$ is the proof shown in figure~\ref{prova}, replacing $\xi$ by $\xi_1$, $A$ by $D_2$ and $D$ by  $D_1$, resulting in the proof showed in figure~\ref{provaphi2}. Note that the discharged formula $(((D_{2}\imply \xi_{1})\imply D_{2})\imply D_{2})\imply \xi_{1}$ is broke in two lines for reasons of space economy. 
The following lemma will be used in the proof of proposition~\ref{upper}.

\begin{lemma}\label{subs}
In the formula $\xi_i$, $i>0$, if we proceed in a simultaneous substitution, replacing $C$ by $\xi_1$, and for each $k>0$, replacing $D_{k}$ by $D_{k+1}$, the resulting formula is $\chi[D_{i+1},\xi_i]$. 
\end{lemma}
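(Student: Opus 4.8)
The plan is to prove the statement by induction on $i$, after isolating the one structural fact that does all the work. Write $\theta$ for the simultaneous substitution that replaces $C$ by $\xi_1$ and each $D_k$ by $D_{k+1}$ (for $k>0$), so that the claim reads $\theta(\xi_i)=\chi[D_{i+1},\xi_i]$. Since $\chi[D_{i+1},\xi_i]=\xi_{i+1}$ holds by definition, it is equivalent, and slightly more convenient for the induction, to prove the identity $\theta(\xi_i)=\xi_{i+1}$.

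The key observation is that substitution acts as a homomorphism on formula syntax, and $\chi[X,Y]$ is built from its two arguments using $\imply$ only. Consequently, for any formulas $\alpha$ and $\beta$ we have the distribution property
\[
\theta(\chi[\alpha,\beta])=\chi[\theta(\alpha),\theta(\beta)].
\]
This single fact reduces every step of the argument to bookkeeping on the two arguments of $\chi$.

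For the base case $i=1$ I would compute directly, using the distribution property: $\theta(\xi_1)=\theta(\chi[D_1,C])=\chi[\theta(D_1),\theta(C)]=\chi[D_2,\xi_1]$, which is exactly $\chi[D_{1+1},\xi_1]=\xi_2$, as required. For the inductive step, assuming $\theta(\xi_i)=\xi_{i+1}$, I would unfold the recursive definition of $\xi_{i+1}$ and apply the distribution property once more:
\[
\theta(\xi_{i+1})=\theta(\chi[D_{i+1},\xi_i])=\chi[\theta(D_{i+1}),\theta(\xi_i)]=\chi[D_{i+2},\xi_{i+1}],
\]
where the last equality uses $\theta(D_{i+1})=D_{i+2}$ together with the induction hypothesis $\theta(\xi_i)=\xi_{i+1}$. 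This is precisely $\chi[D_{(i+1)+1},\xi_{i+1}]=\xi_{i+2}$, which closes the induction.

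The only delicate point to flag is that $\theta$ must be read as a genuinely \emph{simultaneous} substitution: when $C$ is replaced by $\xi_1$, the occurrences of $D_1$ and $C$ inside that inserted copy of $\xi_1$ are not themselves rewritten, and the index shift $D_k\mapsto D_{k+1}$ is applied in parallel rather than iteratively. With that reading there is no real obstacle to the proof; the substance is entirely in getting the index bookkeeping right, and the distribution property above makes that transparent.
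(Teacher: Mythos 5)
Your proof is correct and follows essentially the same route as the paper's: induction on $i$, with the base case $\theta(\xi_1)=\chi[D_2,\xi_1]$ checked directly and the inductive step obtained by unfolding $\xi_{i+1}=\chi[D_{i+1},\xi_i]$ and pushing the substitution through. Your explicit ``distribution property'' $\theta(\chi[\alpha,\beta])=\chi[\theta(\alpha),\theta(\beta)]$ is just a cleaner packaging of what the paper does implicitly (its remark that $D_{i+1}$ does not occur in $\xi_i$ serves the same purpose), and your caveat about the substitution being genuinely simultaneous is the right one to flag.
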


\begin{proof}
This lemma is proved by induction on $i$. For $\xi_1$ we observe that replacing $C$ by $\xi_1$ and $D_1$ by $D_2$ in $\xi_1$, the resulting formula is $\chi[D_2,\xi_1]$. Assuming that for $i>0$, replacing of $C$ by $\xi_1$ and, for each $k=1,i$, simultaneously replacing  $D_i$ by $D_{i+1}$ in $\xi_i$,  yields $\chi[D_{i+1},\xi_i]$. Observing that $\xi_{i+1}=\chi[D_{i+1},\xi_i]$ and by inductive hypothesis, simultaneous replacing of $C$ by $\xi_1$ and $D_k$ by $D_{k+1}$ in $\xi_{i}$, $k=1,i$, yields $\xi_{i+1}$. As $D_{i+1}$ does not occur in $\xi_i$, finally replacing $D_{i+1}$ by $D_{i+2}$ in $\xi_{i+1}=\chi[D_{i+1},\xi_{i+1}]$ yields $\chi[D_{i+2},\xi_{i+1}]$. This proves the inductive step. 
\end{proof}

Another observation is that substitutions as the above shown in the lemma, if applied in a derivation $\Pi$ in \mil, do imply that the resulting tree is a valid derivation too. This fact is justified by observing that the replacements are always on atomic formulas and the rules of \mil do not have provisos to be unsatisfied as consequence of these replacements. Thus,we have the following fact.

\begin{fact}\label{fato}
If $\Pi$ is a derivation of $\alpha$ from $\gamma_1,\ldots,\gamma_l$ and a substitution $\mathcal{S}$ (of atomic formulas only) is applied to $\Pi$ then $\mathcal{S}(\Pi)$ is a derivation of $\mathcal{S}(\alpha)$ from $\mathcal{S}(\gamma_1),\ldots,\mathcal{S}(\gamma_l)$. Besides that, if $\Pi$ is normal then $\mathcal{S}(\Pi)$ is normal too. 
\end{fact}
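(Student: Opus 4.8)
The plan is to prove both assertions by a single structural induction on the derivation $\Pi$, the whole argument resting on one elementary observation: because $\imply$ is the only connective of \mil, any substitution $\mathcal{S}$ of propositional letters by formulas is a homomorphism for $\imply$, i.e.\ $\mathcal{S}(\alpha\imply\beta)=\mathcal{S}(\alpha)\imply\mathcal{S}(\beta)$ for all $\alpha,\beta$. This single identity is what guarantees that the shape of every premise and conclusion is preserved by $\mathcal{S}$, so that each rule instance remains a legal instance of the \emph{same} rule after substitution.

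First I would establish the claim about validity and about the conclusion and open assumptions. The base case is a one-node derivation consisting of an assumption $\alpha$; then $\mathcal{S}(\Pi)$ is the assumption $\mathcal{S}(\alpha)$, trivially a derivation of $\mathcal{S}(\alpha)$ from $\mathcal{S}(\alpha)$. For the inductive step there are exactly two cases. If the last rule is $\imply$-Elim applied to subderivations of $\alpha$ and of $\alpha\imply\beta$, the induction hypothesis gives derivations of $\mathcal{S}(\alpha)$ and of $\mathcal{S}(\alpha)\imply\mathcal{S}(\beta)$ (using the homomorphism identity on the second premise), and one further $\imply$-Elim yields $\mathcal{S}(\beta)$. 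If the last rule is $\imply$-Intro discharging $[\alpha]$ in a subderivation of $\beta$, the induction hypothesis gives a derivation of $\mathcal{S}(\beta)$; discharging the images of the marked occurrences of $\alpha$ produces $\mathcal{S}(\alpha)\imply\mathcal{S}(\beta)=\mathcal{S}(\alpha\imply\beta)$. In both cases the open assumptions are exactly the $\mathcal{S}$-images of those of $\Pi$. As the surrounding text already notes, the \mil rules carry no provisos that a substitution of atomic formulas could violate, so no step can become illegal.

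For the normality clause I would argue that $\mathcal{S}$ changes only the formulas decorating the nodes of $\Pi$, leaving untouched both the tree skeleton and the rule applied at each node. A maximal formula (redex) is a node that is simultaneously the conclusion of an $\imply$-Intro and the major premise of an $\imply$-Elim; this is a purely combinatorial property of the skeleton, independent of which formula labels the node. Hence the redexes of $\mathcal{S}(\Pi)$ sit at exactly the same positions as those of $\Pi$, and if $\Pi$ has none then neither does $\mathcal{S}(\Pi)$.

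The argument is routine --- this is why it is stated as a fact --- and the only point that deserves care is the discharge bookkeeping in the $\imply$-Intro case. It can happen that two open assumptions $\gamma_i\neq\gamma_j$ become identical after substitution, $\mathcal{S}(\gamma_i)=\mathcal{S}(\gamma_j)$, or that a discharged formula collapses onto an undischarged one. I would handle this by keeping the Prawitz convention of tracking assumptions and their discharge by occurrence markers rather than by formula identity: since $\mathcal{S}$ preserves the marking, the set of occurrences discharged at each $\imply$-Intro is preserved verbatim, and the potential merging of formulas has no effect on the legitimacy of the derivation, because the rule permits discharging any selected set of occurrences of the premise. This is the one place where the labelled formulation of \mil is genuinely used.
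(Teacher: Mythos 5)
Your proof is correct and follows essentially the same route as the paper, which justifies the Fact only by the one-line observation that atomic substitutions respect the (proviso-free) rules of \mil; you simply carry out the routine structural induction explicitly. Your extra care about tracking discharged assumptions by occurrence markers (in case distinct assumption formulas are identified by $\mathcal{S}$) is a welcome refinement the paper glosses over, but it does not change the argument.
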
 

\begin{landscape}
\begin{figure}[h]
{\tiny
\begin{prooftree}
\def\defaultHypSeparation{\hskip 0mm}
\insertBetweenHyps{\hskip -4cm}
\AxiomC{$[D_{1}]^{3}$}
\UnaryInfC{$(((D_{1}\imply C)\imply D_{1})\imply D_{1})$}
\AxiomC{$[D_{2}]^{1}$}
\UnaryInfC{$((D_{2}\imply \xi_{1})\imply D_{2})\imply D_{2}$}
\AxiomC{$[(((D_{2}\imply \xi_{1})\imply D_{2})$}
\noLine
\UnaryInfC{$\imply D_{2})\imply \xi_{1}]^{5}$}
\BinaryInfC{$\xi_{1}$}
\RightLabel{1}
\UnaryInfC{$D_{2}\imply\xi_{1}$}
\AxiomC{$[(D_{2}\imply\xi_{1})\imply D_{2}]^{2}$}
\BinaryInfC{$D_{2}$}
\RightLabel{2}
\UnaryInfC{$((D_{2}\imply\xi_{1})\imply D_{2})\imply D_{2}$}
\AxiomC{$[(((D_{2}\imply \xi_{1})\imply D_{2})$}
\noLine
\UnaryInfC{$\imply D_{2})\imply \xi_{1}]^{5}$}
\BinaryInfC{$\xi_{1}$}
\BinaryInfC{$C$}
\RightLabel{$3$}
\UnaryInfC{$D_{1}\imply C$}
\AxiomC{$[(D_{1}\imply C)\imply D_{1}]^{4}$}
\BinaryInfC{$D_{1}$}
\RightLabel{4}
\UnaryInfC{$((D_{1}\imply C)\imply D_{1})\imply D_{1}$}
\AxiomC{$[(((D_{2}\imply \xi_{1})\imply D_{2})$}
\noLine
\UnaryInfC{$\imply D_{2})\imply \xi_{1}]^{5}$}
\noLine
\UnaryInfC{$\Sigma$}
\noLine
\UnaryInfC{$\xi_{1}$}
\BinaryInfC{$C$}
\RightLabel{5}
\UnaryInfC{$((((D_{2}\imply \xi_{1})\imply D_{2})\imply D_{2})\imply \xi_{1})\imply C$}
\noLine
\UnaryInfC{}
\end{prooftree}
\begin{prooftree}
\AxiomC{$(((D_{2}\imply \xi_{1})\imply D_{2})\imply D_{2})\imply \xi_{1}$}
\noLine
\UnaryInfC{$\Sigma$}
\noLine
\UnaryInfC{$\xi_{1}$}
\AxiomC{is}
\AxiomC{$[D_{2}]$}
\UnaryInfC{$[((D_{2}\imply \xi_{1})\imply D_{2})\imply D_{2}]^{1}$}
\AxiomC{$[(((D_{2}\imply \xi_{1})\imply D_{2})\imply D_{2})\imply \xi_{1}]^{5}$}
\BinaryInfC{$\xi_{1}$}
\RightLabel{1}
\UnaryInfC{$D_{2}\imply\xi_{1}$}
\AxiomC{$[(D_{2}\imply\xi_{1})\imply D_{2}]^{2}$}
\BinaryInfC{$D_{2}$}
\RightLabel{2}
\UnaryInfC{$((D_{2}\imply\xi_{1})\imply D_{2})\imply D_{2}$}
\AxiomC{$[(((D_{2}\imply \xi_{1})\imply D_{2})$}
\noLine
\UnaryInfC{$\imply D_{2})\imply \xi_{1}]^{5}$}
\BinaryInfC{$\xi_{1}$}
\noLine
\TrinaryInfC{}
\end{prooftree}
}
\caption{Proof of $\varphi_{2}$ in \mil}\label{provaphi2}
\end{figure}
\end{landscape}

As $\varphi_{1}$ has two ($2^{1}$) occurrences of the same assumption and $\varphi_{2}$ has four ($2^{2}$) occurrences of the same assumptions, we have the following result.

\begin{proposition}\label{upper}
For any $n>0$, there is a normal proof of $\varphi_n$ having $2^{n}$ occurrences of the same assumptions, that are  discharged by the last rule of the proof.  
\end{proposition}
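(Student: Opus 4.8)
The plan is to prove Proposition~\ref{upper} by induction on $n$, exhibiting explicitly the normal proof that uses exactly $2^n$ occurrences of a single assumption, all discharged by the last rule. The base cases $n=1$ and $n=2$ are already in hand: the remarks preceding the proposition note that $\varphi_1$ has a normal proof with $2^1=2$ occurrences of $\xi_1$ (obtained from $\Sigma$ by the substitution $\xi\mapsto C$, $A\mapsto D_1$, followed by one $\imply$-introduction), and $\varphi_2$ has the normal proof displayed in figure~\ref{provaphi2} with $2^2=4$ occurrences of the assumption $(((D_2\imply\xi_1)\imply D_2)\imply D_2)\imply\xi_1$. These provide the anchor of the induction and, more importantly, reveal the repeating pattern that the inductive step must iterate.

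The key step is to describe the inductive construction abstractly. Suppose we have a normal proof $\Pi_n$ of $\varphi_n=\xi_n\imply C$ using $2^n$ occurrences of the assumption $\xi_n=\chi[D_n,\xi_{n-1}]$, all discharged at the last rule. To build $\Pi_{n+1}$ for $\varphi_{n+1}=\xi_{n+1}\imply C=\chi[D_{n+1},\xi_n]\imply C$, I would take the structure of figure~\ref{prova}/figure~\ref{provaphi2} as a template: this template is a normal proof that, from an assumption of the form $\chi[\cdot,\cdot]$ together with a subderivation producing the inner formula, derives $C$ using the outer Peirce-shaped sub-proof twice. Concretely, the inner formula $\xi$ in figure~\ref{prova} is itself proved by a copy of the same pattern, so each "layer'' of the construction duplicates the number of assumption occurrences appearing below it. The crucial bookkeeping is that where figure~\ref{provaphi2} uses the fixed subproof $\Sigma$ producing $\xi_1$ from the assumption, the general construction plugs in (a substitution instance of) $\Pi_n$ to produce $\xi_n$, and $\Pi_n$ itself contains $2^n$ assumption occurrences; since the template uses this plugged-in derivation in a way that, combined with the outer duplicating structure, doubles the count, we reach $2\cdot 2^n=2^{n+1}$ occurrences.

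Here Lemma~\ref{subs} and Fact~\ref{fato} do the real work and keep the argument honest. Lemma~\ref{subs} guarantees that the simultaneous substitution $C\mapsto\xi_1$, $D_k\mapsto D_{k+1}$ turns the formula $\xi_i$ into $\chi[D_{i+1},\xi_i]$, i.e.\ it realigns the indices so that the derivation $\Pi_n$, which is built around $\xi_n$, can be reused one level up around $\xi_{n+1}$. Fact~\ref{fato} then certifies that applying such an atomic substitution $\mathcal{S}$ to the normal derivation $\Pi_n$ yields a derivation $\mathcal{S}(\Pi_n)$ that is again valid and still normal, and that it has the same number of assumption occurrences (substitution does not merge or split occurrences). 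So the inductive step is: apply the index-shifting substitution to $\Pi_n$ to obtain a normal proof of the inner formula with $2^n$ occurrences, embed it into the two-fold template, discharge all occurrences by a single final $\imply$-introduction, and verify normality of the result.

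The main obstacle I expect is the normality and occurrence-counting bookkeeping rather than any deep idea. I must check that slotting $\mathcal{S}(\Pi_n)$ into the template introduces no maximal formula (no $\imply$-introduction immediately followed by an $\imply$-elimination on the same formula) at the seams, so that $\Pi_{n+1}$ is genuinely normal; Fact~\ref{fato} handles normality of the substituted block, but the junction between that block and the surrounding Peirce-shaped structure still needs inspection. I also have to confirm that every one of the $2^{n+1}$ assumption occurrences is literally the same formula $\xi_{n+1}$ and that all of them are discharged at the very last rule, as the statement demands. Once the template is shown to double the occurrence count cleanly and preserve both the single-discharge property and normality, the induction closes and the proposition follows.
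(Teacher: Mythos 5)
Your proposal follows essentially the same route as the paper: induction on $n$, with the inductive step applying the index-shifting substitution of Lemma~\ref{subs} (licensed by Fact~\ref{fato}) to the body of $\Pi_n$ and plugging the resulting derivation $\Pi^{\star}$ twice into the Peirce-shaped template of figure~\ref{provaphi2}, so that a single final $\imply$-introduction discharges $2\cdot 2^{n}=2^{n+1}$ occurrences of $\xi_{n+1}$. One small correction to your bookkeeping: the substituted derivation concludes $\xi_{1}$ (not $\xi_{n}$) from $2^{n}$ occurrences of $\xi_{n+1}$, which is exactly the shape the doubling template requires.
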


\begin{proof}
The proof proceeds by induction. The basis $n=1$ is the proof shown inside figure~\ref{provaphi2}. Assuming that $\varphi_{i}$, $i>0$ has a normal proof $\Pi_{\varphi_i}$ having $2^i$ occurrences of $\xi_{i}$ discharged by its last inference rule. Thus, we have a normal derivation $\Pi$ of C from $2^i$ occurrences of $\xi_{i}$, remembering that $\varphi_{i}=\xi_{i}\imply C$. We argue that if we simultaneously replace $C$ by $\xi_{1}$, and for each $k=1,i$, replace $D_{k}$ by $D_{k+1}$, we will have, by lemma~\ref{subs} and fact~\ref{fato}, a normal derivation of $\xi_{1}$ from $2^i$ occurrences of $\chi[D_{i+1},\xi_{i}]$. Let us call this derivation $\Pi^{\star}$. The following derivation (see figure~\ref{ind-hyp}) is a derivation of $C$ from  
$((((D_{i+1}\imply \xi_{i})\imply D_{i+1})\imply D_{i+1})\imply \xi_{i})\imply C$, i.e., it is a derivation of $C$ from 
$\xi_{i+1}$, and hence, by an $\imply$-introduction of we have a normal derivation of $\varphi_{i+1}$ discharging $2^{i}+2^{i}=2^{i+1}$ assumptions of the formula $\xi_{i+1}$ 
\begin{figure}[h]
{\tiny
\begin{prooftree}
\def\defaultHypSeparation{\hskip 0mm}
\insertBetweenHyps{\hskip -1cm}
\AxiomC{$[D_{1}]^{3}$}
\UnaryInfC{$(((D_{1}\imply C)\imply D_{1})\imply D_{1})$}
\AxiomC{$[(((D_{i+1}\imply \xi_{i})\imply D_{i+1})\imply D_{i+1})\imply \xi_{i}]^{5}$}
\noLine
\UnaryInfC{$\Pi^{\star}$}
\noLine
\UnaryInfC{$\xi_{1}$}
\BinaryInfC{$C$}
\RightLabel{$3$}
\UnaryInfC{$D_{1}\imply C$}
\AxiomC{$[(D_{1}\imply C)\imply D_{1}]^{4}$}
\BinaryInfC{$D_{1}$}
\RightLabel{$4$}
\UnaryInfC{$((D_{1}\imply C)\imply D_{1})\imply D_{1}$}
\AxiomC{$[(((D_{i+1}\imply \xi_{i})\imply D_{i+1})\imply D_{i+1})\imply \xi_{i}]^{5}$}
\noLine
\UnaryInfC{$\Pi^{\star}$}
\noLine
\UnaryInfC{$\xi_{1}$}
\BinaryInfC{$C$}
\RightLabel{5}
\UnaryInfC{$((((D_{i+1}\imply \xi_{i})\imply D_{i+1})\imply D_{i+1})\imply \xi_{i})\imply C$}
\end{prooftree}
}
\caption{Proof of $\varphi_{i+1}$ in \mil with $2^{i+1}$ discharged assumptions of $\xi_{i+1}$}\label{ind-hyp}
\end{figure}
\end{proof}
\begin{center}
Q.E.D.
\end{center}

The following proposition provides $2^{i}$ as the lower bound for number of assumption occurrences of a sole formula in proving $\varphi_i$ by means of normal proofs in \mil.

\begin{theorem}\label{lower}
Any normal proof of $\varphi_i$ in \mil has at least $2^i$ assumption occurrences of $\xi_i$.
\end{theorem}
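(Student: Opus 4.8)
The plan is to argue by induction on $i$, reducing the theorem to a statement about normal derivations of the atom $C$ and then exhibiting a forced ``doubling'' of the demand for the top formula $\xi_i$. First I would invoke the normalization theorem for \mil together with the subformula property for \emph{closed} normal proofs. Since a proof of $\varphi_i=\xi_i\imply C$ has no open assumptions, every formula in it is a subformula of $\varphi_i$; a subformula of $\varphi_i$ having $\varphi_i$ itself as a consequent can only be $\varphi_i$, which rules out the last rule being an $\imply$-elimination. Hence the last rule is an $\imply$-introduction discharging the open occurrences of $\xi_i$, and it suffices to prove: every normal derivation of $C$ whose undischarged assumptions are all occurrences of $\xi_i$ contains at least $2^i$ such occurrences. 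Writing $\xi_0$ for $C$, I take this as the statement $L(i)$ to be proved by induction.

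The structural engine is a census of positions, justified by the subformula property. I would first record that in $\xi_i$ both $C$ and each $\xi_j$ with $0<j<i$ occur only in positive (rightmost, consequent) positions, so none of them is ever the antecedent discharged by an $\imply$-introduction; consequently, in a derivation whose open assumptions are only $\xi_i$, no occurrence of $C$ or of $\xi_j$ ($0<j<i$) can ever be a leaf, and the helper formula $D_1\imply C$ likewise cannot appear as a leaf, since introducing its enclosing antecedent $(D_1\imply C)\imply D_1$ is never needed. I would then use the fact that an atomic end-formula of a normal derivation is produced by an $\imply$-elimination whose major premise, being a subformula of $\xi_i$ with that atom as consequent, is severely constrained: the target $C$ forces a major premise equal to $\xi_1$ or to $D_1\imply C$, and the target $D_1$ forces $(D_1\imply C)\imply D_1$ or $((D_1\imply C)\imply D_1)\imply D_1$.

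The heart of the argument is the doubling, mirroring the upper-bound construction of figure~\ref{ind-hyp}. Tracing the production of the final $C$ through the $\xi_1$-layer, I would isolate within any normal derivation of $C$ from $\xi_i$ two \emph{assumption-disjoint} subderivations, each deriving $\xi_1$ from undischarged assumptions all equal to $\xi_i$: one supplying the $\xi_1$ that is the major premise of the final elimination, and a second one forced inside the derivation of the minor premise $((D_1\imply C)\imply D_1)\imply D_1$, because building that minor premise requires producing $D_1\imply C$, hence re-deriving $C$ at a point where $D_1$ has become available but neither $\xi_1$ nor $D_1\imply C$ is freely available. To lower-bound each subderivation I would couple the induction hypothesis with the substitution of Lemma~\ref{subs} and Fact~\ref{fato}: under the renaming $C\mapsto\xi_1$, $D_k\mapsto D_{k+1}$, normal derivations of $C$ from occurrences of $\xi_{i-1}$ correspond, count-preservingly, to normal derivations of $\xi_1$ from occurrences of $\xi_i=\chi[D_i,\xi_{i-1}]$. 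Thus by $L(i-1)$ each of the two subderivations carries at least $2^{i-1}$ occurrences of $\xi_i$, and disjointness gives $2\cdot 2^{i-1}=2^i$, while the base case $L(1)$ is the two occurrences already displayed inside figure~\ref{provaphi2} and formalized in Proposition~\ref{upper}.

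The step I expect to be the main obstacle is making the extraction of the two disjoint subderivations rigorous rather than merely reading it off the expected shape, and proving that no ``shortcut'' lets a normal proof reach $C$ or the intermediate $D_j$ while bypassing the $\xi_i$-layer. The delicate points are three: showing, via the census above, that every leaf which might supply a helper such as $D_1\imply C$ or $(D_j\imply\xi_{j-1})\imply D_j$ is itself introduced higher up by a derivation whose atomic subgoals again have only $\xi_i$ available at the leaves, so the recursion genuinely bottoms out on fresh occurrences of $\xi_i$; verifying that the availability of a given $D_j$ trivializes only the levels strictly above $j$ and never the level-$1$ branching that yields the factor of two; and checking that the substitution of Lemma~\ref{subs} is injective on the formulas actually occurring in the derivations so that its inverse is well defined on the image and the correspondence runs in both directions. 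Turning this informal ``no shortcuts'' reasoning into a clean numerical invariant, for instance a function counting $\xi_i$-leaves in terms of the set of currently available $D$-atoms and shown to double at each descent through a $\chi$-gadget, is where the real work lies.
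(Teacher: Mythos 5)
Your doubling intuition is the right one---it mirrors the shape of the upper-bound construction in figure~\ref{ind-hyp}---but the proposal is a plan whose load-bearing steps are exactly the ones you defer, and two of them fail as literally stated. First, the base case: Proposition~\ref{upper} and figure~\ref{provaphi2} only exhibit a proof of $\varphi_1$ \emph{with} two assumption occurrences; they do not show that one occurrence is impossible, which is what $L(1)$ asserts. The paper closes this semantically: a one-occurrence normal proof would force the Peirce instance $((D_1\imply C)\imply D_1)\imply D_1$ to be derivable in \mil, which a two-world Kripke model refutes. Second, and more seriously, the ``count-preserving correspondence in both directions'' through Lemma~\ref{subs} and Fact~\ref{fato} does not exist. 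Fact~\ref{fato} is one-directional, and the inverse direction is false: a normal derivation of $\xi_1$ from occurrences of $\xi_i$ may, for instance, end with an $\imply$-introduction discharging $((D_1\imply C)\imply D_1)\imply D_1$, and it will then contain occurrences of the atoms $D_1$ and $C$, neither of which lies in the image of the substitution $C\mapsto\xi_1$, $D_k\mapsto D_{k+1}$ (no formula is mapped to an atom by an atomic substitution whose values are $\xi_1,D_2,D_3,\dots$). Injectivity, which is what you propose to verify, is beside the point; the obstruction is that the derivations you need to bound are not substitution images at all. So you cannot pull $L(i-1)$ back through the substitution; you would have to strengthen the induction statement so that it speaks directly about normal derivations of $\xi_1$ (or of each $\xi_j$, $j<i$) from assumptions $\xi_i$ and reprove the doubling at that level. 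The extraction of the two assumption-disjoint subderivations of $\xi_1$ is likewise asserted rather than proved.

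For comparison, the paper sidesteps all of this by working at the assumption leaves rather than at the conclusion: it takes the least $k$ for which some normal proof $\Sigma_k$ of $\varphi_k$ has fewer than $2^k$ occurrences of $\xi_k$, shows by normality and the subformula property that every such occurrence must be the major premise of an $\imply$-elimination with conclusion $\xi_{k-1}$, prunes away the minor-premise subderivations $\Sigma^{\prime}$, and reads off a normal proof of $\varphi_{k-1}$ with fewer than $2^{k-1}$ occurrences of $\xi_{k-1}$, contradicting minimality of $k$; the base case is the semantic argument above. That route needs no inverse substitution and no transport of an induction hypothesis across a renaming; the work it leaves implicit is the combinatorial claim that the pruning at least halves the assumption count. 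If you want to keep your root-end doubling, the cleanest repair is the strengthened numerical invariant you gesture at in your final paragraph, proved directly by induction on the structure of normal derivations, with the substitution machinery dropped entirely.
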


\begin{proof}
We prove that for any $i$, there is no normal proof of $\varphi_i$ with less than $2^{i}$ assumption occurrences of $\xi_i$.
We first observe that $\varphi_1$, i.e., $((((D_1\imply C)\imply D_1)\imply D_1)\imply C)\imply C$  is not provable with 
only one occurrence of $\xi_1=(((D_1\imply C)\imply D_1)\imply D_1)\imply C)$. If this was the case we would have
that $((D_1\imply C)\imply D_1)\imply D_1$ is provable in \mil, and this cannot be since this formula is only classically valid.
A Kripke model with two worlds such that in the first world neither $C$ nor $D_1$ holds and in second $D_1$ holds but not $C$ falsifies $(((D_1\imply C)\imply D_1)\imply D_1)\imply C$.

Consider that there are normal proofs of $\varphi_i$ with less than $2^i$ assumption occurrences of $\xi_i$. So there is the least $k$ ($k>0$), such that, $\varphi_k$ has a normal proof with less than $2^k$ assumption occurrences of $\xi_k$. Let $\Sigma_k$ be such proof. Since $\varphi_k=\xi_k\imply C$, this proof is as follows. We remember that $\xi_k$ is the only open assumption in $\Sigma_k$. 
\begin{prooftree}
\AxiomC{$[\xi_k]^l$}
\noLine
\UnaryInfC{$\Sigma_k$}
\noLine
\UnaryInfC{$C$}
\RightLabel{$l$}
\UnaryInfC{$\xi_k\imply C$}
\end{prooftree}
Since $\xi_k=\chi[D_k,\xi_{k-1}]=(((D_k\imply \xi_{k-1})\imply D_k)\imply D_k)\imply \xi_{k-1}$, it has to be major premise of an $\imply$-elim rule. If this is not the case then $\xi_k$ is minor premise of a $\imply$-elim rule having a major premise of the form $\xi_k\imply\beta$. This formula on its turn has to be sub-formula of the open assumption of this branch, for the derivation is normal and $\xi_k\imply\beta$ can be only conclusion of an application of an $\imply$-elim rule. Since the only open assumption in $\Sigma_k$ is $\xi_k$ itself, the case of $\xi_k$ as minor premise is not possible. Thus, as $\xi_k$ is major premise, $\Sigma_k$ is of the following form, remembering how is $\xi_k$, showed  in the first line of this paragraph.
\begin{prooftree}
\AxiomC{$\Sigma^{\prime}$}
\noLine
\UnaryInfC{$(((D_k\imply \xi_{k-1})\imply D_k)\imply D_k)$}
\AxiomC{$[(((D_k\imply \xi_{k-1})\imply D_k)\imply D_k)\imply \xi_{k-1}]^l$}
\BinaryInfC{$\xi_{k-1}$}
\noLine
\UnaryInfC{$\Sigma_k$}
\noLine
\UnaryInfC{$C$}
\RightLabel{$l$}
\UnaryInfC{$\xi_k\imply C$}
\end{prooftree}
Note that $\Sigma^{\prime}$ is a sub-derivation of $\Sigma_k$ and it may have $\xi_k$ as open assumption too, but this is not necessary. If we remove every sub-derivation like $\Sigma^{\prime}$ from $\Sigma_k$ we end up with a proof as following:
\begin{prooftree}
\AxiomC{$[\xi_{k-1}]^l$}
\noLine
\UnaryInfC{$\Sigma_{k-1}$}
\noLine
\UnaryInfC{$C$}
\RightLabel{$l$}
\UnaryInfC{$\xi_{k-1}\imply C$}
\end{prooftree}
The proof above is a proof of $\varphi_{k-1}$ with less than $2^{k-1}$ assumption occurrences of $\xi_{k-1}$ discharged by the last rule. This contradicts the fact that $k$ is the least number holding this property. 
\end{proof}
\begin{center}
Q.E.D.
\end{center}

\section{Counter-model construction in Sequent Calculus for \mil}
\label{counter-model}
In this section we show how it is easy, from the computational aspect, to produce \mil counter-models from the following (incomplete) sequent calculus for \mil.
\begin{prooftree}
\AxiomC{}
\RightLabel{Axiom}
\UnaryInfC{\Seq{\Xi,p}{p,[\Delta]}}
\end{prooftree}

\begin{prooftree}
\AxiomC{\Seq{\Xi,\gamma_1}{\gamma_2,[\Delta]}}
\RightLabel{$\imply$-right}
\UnaryInfC{\Seq{\Xi}{\gamma_1\imply\gamma_2,[\Delta]}}
\end{prooftree}

\begin{prooftree}
\AxiomC{\Seq{\Xi}{\alpha,[\gamma,\Delta]}}
\AxiomC{\Seq{\Xi,\beta}{\gamma}}
\RightLabel{$\imply$-left}
\BinaryInfC{\Seq{\Xi,\alpha\imply\beta}{\gamma,[\Delta]}}
\end{prooftree}

The formulas in the right-hand side of the sequent and between the brackets are used only for counter-model construction. The main idea is that a sequent of the form $\Seq{\Xi,p}{q,[\Delta]}$ having all members of $\Xi\cup\Delta$ as propositional letters and $\{\Xi,p\}\cap\{q,\Delta\}=\emptyset$ is falsified in a Kripke model with only one world. From this case and using the reversible (if any premise is not valid the conclusion is too) rules of the system it is possible to build a polynomially sized Kripke model for the conclusion of the tree. Remember that in this case we do not have a proof. As already said, this system is incomplete, for it is unable to prove any of the formulas belonging to the class we presented here. The mentioned formulas are only provable if the $\imply$-left rule used is the following, instead of the above shown. 

\begin{prooftree}
\AxiomC{\Seq{\Xi,\alpha\imply\beta}{\alpha,[\gamma,\Delta]}}
\AxiomC{\Seq{\Xi,\alpha\imply\beta,\beta}{\gamma}}
\RightLabel{$\imply$-left}
\BinaryInfC{\Seq{\Xi,\alpha\imply\beta}{\gamma,[\Delta]}}
\end{prooftree}

In this case a counter-model generation is not so obvious, in fact we did not obtained one. If there were a bound on the use of repeated formulas, we could have used both versions of the $\imply$-rule for a counter-model generation. Of course, for every formula there is a bound, for example the formula $((((A\imply B)\imply A)\imply A)\imply B)\imply B$ the bound is 2. What we have shown  is that there is no fixed bound for every formula. In fact, if such a fixed bound existed we would have that every \mil formula would have a polynomially sized search-space to find either proofs or counter-models.  
  
\section{Conclusion}

Our contribution is in the context that \mil is the hardest and most representative propositional logic to define efficient proof-procedures. We show an example alerting for the fact that allowing unlimited use of assumptions is worth for any complete proof-procedure. This example runs in \mil. We are not aware of a similar example for classical logic. In this case classical propositional logic would be more efficient than \mil if such example does not existed. Propositional logic complexity has a lot of conjectures, starting with the relations between the main complexity classes. This article has the sole purpose of providing an example where the exponential grow of proofs has nothing to do with disjunction and combinatorial principles like the Pigeon-Hole\footnote{The pigeon-hole principle was used to provide a super-polynomial lower bound for Robinson's (propositional) Resolution}. We provided such example. 

Developers of theorem provers have to be aware of many aspects of the logic in order to design a efficient system. A system that saves memory and it is fast. Of course, dealing with PSPACE-complete problems is not a so easy task. Any information that can guide the designer is of help. Knowing that the number of copies of a formulas in a proof cab be a ``bottleneck''  for saving memory, an obvious solution would be the use of references instead of copies when representing proofs. The number of references is exponential, but references to formulas are smaller than formulas in most of the cases. This approach points out to the use of graphs (digraphs in fact) for representing proofs. There are a lot of developments done in this direction reported in the literature. Most of them are more semantically than implementation driven. Proof-nets (see \cite{Girard}) represents an approach that defends the use of graphs as the most adequate representation for proofs. We agree with that, but we add a practical motivation for considering digraphs instead of trees for representing proofs (see \cite{EPTCS}) 

\section{Acknowledgments} 

The author would like to thank prof. Gilles Dowek for hearing and reading the initial ideas presented here and providing very good suggestions. We want to thank Jefferson dos Santos for reading, pointing unclear explanations and helping improving the text. 

\bibliographystyle{plain}
\bibliography{example-more-twice}

\end{document}